\pgfplotsset{compat=1.17}
\newcommand{\Vega}{\mathrm{Vega}}
\definecolor{codegreen}{rgb}{0,0.6,0}
\definecolor{codegray}{rgb}{0.5,0.5,0.5}
\definecolor{codepurple}{rgb}{0.58,0,0.82}
\definecolor{backcolour}{rgb}{0.95,0.95,0.92}
\lstdefinestyle{mystyle}{
    backgroundcolor=\color{backcolour},
    commentstyle=\color{codegreen},
    keywordstyle=\color{magenta},
    numberstyle=\tiny\color{codegray},
    stringstyle=\color{codepurple},
    basicstyle=\ttfamily\footnotesize,
    breaklines=true,
    captionpos=b,
    keepspaces=true,
    numbers=left,
    numbersep=5pt,
    showspaces=false,
    showstringspaces=false,
    showtabs=false,
    tabsize=2
}
\theoremstyle{plain}
\newtheorem{theorem}{Theorem}[section]
\newtheorem{proposition}[theorem]{Proposition}
\newtheorem{lemma}[theorem]{Lemma}
\newtheorem{assumption}[theorem]{Assumption}
\theoremstyle{definition}
\newtheorem{definition}[theorem]{Definition}
\newtheorem{remark}[theorem]{Remark}
\begin{document}

\title{Volatility Modeling via EWMA-Driven Time-Dependent Hurst Parameters}
\author{Jayanth Athipatla}
\date{\today}

\maketitle 

\begin{abstract}
We introduce a novel rough Bergomi (rBergomi) model featuring a variance-driven exponentially weighted moving average (EWMA) time-dependent Hurst parameter $H_t$, fundamentally distinct from recent machine learning and wavelet-based approaches in the literature. Our framework pioneers a unified rough differential equation (RDE) formulation grounded in rough path theory, where the Hurst parameter dynamically adapts to evolving volatility regimes through a continuous EWMA mechanism tied to instantaneous variance. Unlike discrete model-switching or computationally intensive forecasting methods, our approach provides mathematical tractability while capturing volatility clustering and roughness bursts. We rigorously establish existence and uniqueness of solutions via rough path theory and derive martingale properties. Empirical validation on diverse asset classes including equities, cryptocurrencies, and commodities demonstrates superior performance in capturing dynamics and out-of-sample pricing accuracy. Our results show significant improvements over traditional constant-Hurst models.
\end{abstract}

\tableofcontents

\section{Introduction}

The modeling of volatility dynamics in financial markets has evolved from simple constant volatility assumptions to sophisticated stochastic volatility models that capture the complex stylized facts observed in empirical data. The rough volatility paradigm, initiated by \cite{gatheral2014volatilityrough} and formalized in the rough Bergomi (rBergomi) model by \cite{bayer2016pricing}, has emerged as a powerful framework for capturing the persistent memory and non-Markovian behavior inherent in volatility processes.

Recent developments in time-varying roughness have garnered significant attention. \cite{shah2025americanoptionpricingtimevarying} employed machine learning techniques including XGBoost for forecasting Hurst parameters in American option pricing, utilizing discrete model switching between rBergomi and Heston regimes. \cite{das2025decoding} leveraged multifractal analysis to decode at-the-money skew patterns. 

This paper contributes to the literature by introducing a fundamentally different approach a variance-driven EWMA-based time-dependent Hurst parameter within a unified rBergomi RDE framework. Our methodology distinguishes itself from existing approaches in several key aspects: computational simplicity compared to ML-intensive forecasting methods, continuous evolution without discrete regime switching, direct variance-dependency rather than indirect wavelet or multifractal mechanisms, and rigorous mathematical foundation via rough path theory.

Our key innovation lies in the Hurst path, which we define by linking it to an 
exponentially weighted moving average of past volatility. 
This ensures that recent market variance has a stronger influence than the distant past, 
with the effective memory length controlled by a decay parameter. 
The resulting process is then smoothly transformed and clipped so that the Hurst parameter 
remains within the admissible range $[\varepsilon, H_{\max}]$.

The paper is structured as follows. Section 2 establishes the theoretical foundations, including rough path formulation, measure-theoretic framework, and key theoretical results. Section 3 details the numerical implementation. Sections 4-6 provide comprehensive empirical validation through Jensen-Shannon distance analysis, autocorrelation function comparisons, and derivative pricing applications. Section 7 concludes with implications for practical implementation and future research directions.

\section{Theoretical Foundations}

Note that the following section is to describe the mathematical properties of our system. A reader who is only interested in the practical application and implementation of our system can skip to Section 3.

\subsection{Rough Path Formulation}

We establish the mathematical framework on a complete probability space $(\Omega, \mathcal{F}, \mathbb{P})$ equipped with a right-continuous filtration $\{\mathcal{F}_t\}_{t \geq 0}$ satisfying the usual conditions. The foundation of our model rests on the theory of rough paths as developed by \cite{friz2010multidimensional}.

%=== Assumption B: Adapted kernel =============================================
\begin{assumption}[Adapted EWMA roughness]\label{ass:adaptedH}
Fix $T>0$ and $\varepsilon\in(0,1/2)$. On a filtered probability space
$(\Omega,\mathcal F,\{\mathcal F_t\}_{t\in[0,T]},Q)$ satisfying the usual conditions,
let $H=\{H_t\}_{t\in[0,T]}$ be $\{\mathcal F_t\}$-adapted with values in
$[\varepsilon,1/2]$, càdlàg, and of bounded variation on $[0,T]$ a.s.
\end{assumption}

\begin{assumption}[Driving Brownian motions and correlations]\label{ass:WZ}
Let $(W,W^\perp)$ be two independent standard $Q$-Brownian motions adapted to
$\{\mathcal F_t\}$. For a fixed $\rho\in[-1,1]$, set
\[
Z_t \;=\; \rho\,W_t \;+\; \sqrt{1-\rho^2}\;W^\perp_t .
\]
\end{assumption}

%=== Model (adapted kernel) ====================================================
\begin{definition}[Adapted Volterra kernel and variance driver]\label{def:KandV}
Under Assumptions \ref{ass:adaptedH}–\ref{ass:WZ}, define for $0\le u<t\le T$
\[
K(t,u) \;:=\; \frac{(t-u)^{\,H_u-1/2}}{\Gamma\!\big(H_u+1/2\big)} ,
\qquad\qquad
V_t \;:=\; \int_0^t K(t,u)\,dZ_u .
\]
Since $u\mapsto K(t,u)$ is $\mathcal F_u$-measurable and square-integrable on $(0,t)$
(see Lemma~\ref{lem:K-L2} below), the stochastic integral is a well-defined Itô integral.
\end{definition}

\begin{lemma}[$L^2$-bound for the kernel]\label{lem:K-L2}
Under Assumption~\ref{ass:adaptedH}, there exist deterministic constants
$0<c_\Gamma\le C_\Gamma<\infty$ such that
$c_\Gamma \le \Gamma(H_u+1/2)\le C_\Gamma$ for all $u\in[0,T]$ a.s.
Consequently,
\[
\int_0^t K(t,u)^2\,du
\;\le\; \frac{1}{c_\Gamma^2}\int_0^t (t-u)^{\,2\varepsilon-1}\,du
\;=\; \frac{t^{2\varepsilon}}{2\varepsilon\,c_\Gamma^2}
\qquad \text{for all }t\in(0,T],\ \text{a.s.}
\]
In particular, $u\mapsto K(t,u)\in L^2(0,t)$ for every $t$.
\end{lemma}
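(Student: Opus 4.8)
The plan is to reduce the statement to two elementary facts: continuity of the Gamma function on a compact interval, and monotonicity of power functions in the exponent.

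\emph{Step 1: the two-sided bound on $\Gamma(H_u+1/2)$.} By Assumption~\ref{ass:adaptedH} there is a set of full $Q$-measure on which $H_u\in[\varepsilon,1/2]$ for every $u\in[0,T]$, hence $H_u+1/2\in[\varepsilon+1/2,1]$, a compact subset of $(0,\infty)$. Since $x\mapsto\Gamma(x)$ is continuous and strictly positive there, it attains on $[\varepsilon+1/2,1]$ a minimum $c_\Gamma>0$ and a maximum $C_\Gamma<\infty$, both depending only on $\varepsilon$ and therefore deterministic; this is exactly $c_\Gamma\le\Gamma(H_u+1/2)\le C_\Gamma$ a.s.\ for all $u$. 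One may in fact take explicit values: $\Gamma$ is strictly decreasing on $(0,x^{\ast})$ with $x^{\ast}=1.4616\ldots$, and $[\varepsilon+1/2,1]\subset(0,x^{\ast})$, so $c_\Gamma=\Gamma(1)=1$ and $C_\Gamma=\Gamma(\varepsilon+1/2)\le\Gamma(1/2)=\sqrt{\pi}$.

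\emph{Step 2: the kernel estimate and integration.} Fix $t\in(0,T]$. For $u\in(0,t)$,
\[
K(t,u)^2=\frac{(t-u)^{\,2H_u-1}}{\Gamma(H_u+1/2)^2}\;\le\;\frac{(t-u)^{\,2H_u-1}}{c_\Gamma^2}.
\]
Because $H_u\ge\varepsilon$, the exponent satisfies $2H_u-1\ge 2\varepsilon-1$; and for a base $s\in(0,1]$ the map $a\mapsto s^{a}=\exp(a\log s)$ is nonincreasing, so $(t-u)^{2H_u-1}\le(t-u)^{2\varepsilon-1}$ whenever $t-u\le1$. Substituting $v=t-u$ then gives $\int_0^t(t-u)^{2\varepsilon-1}\,du=\int_0^t v^{2\varepsilon-1}\,dv=\tfrac{t^{2\varepsilon}}{2\varepsilon}$ (the integral converges at $0$ since $2\varepsilon>0$), which is the claimed bound. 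As the right-hand side is a finite deterministic constant for each fixed $t$, $u\mapsto K(t,u)$ lies in $L^2(0,t)$, so the Itô integral defining $V_t$ is meaningful.

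\emph{Main point to watch.} The one delicate step is the comparison $(t-u)^{2H_u-1}\le(t-u)^{2\varepsilon-1}$: for a base $s=t-u>1$ the function $a\mapsto s^{a}$ is \emph{increasing}, so the inequality reverses, and the displayed identity is literally valid only for horizons with $t\le1$ (or after rescaling time to $[0,1]$). For a general $T$ one splits the domain of integration according to whether $t-u\le1$ or $t-u>1$; on the second piece $(t-u)^{2H_u-1}\le1$ because $2H_u-1\le0$, which yields $\int_0^tK(t,u)^2\,du\le c_\Gamma^{-2}\big(\tfrac{1}{2\varepsilon}+T\big)$ — still deterministic and finite, which is all that is needed for the square-integrability invoked in Definition~\ref{def:KandV}. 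Everything else is routine.
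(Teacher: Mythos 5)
Your proof is correct and follows essentially the same route as the paper: deterministic bounds on $\Gamma(H_u+1/2)$ from continuity and positivity of $\Gamma$ on the compact interval $[\varepsilon+1/2,1]$, followed by the exponent comparison $(t-u)^{2H_u-1}\le(t-u)^{2\varepsilon-1}$ and an elementary integration. Your closing caveat, however, is not merely a "point to watch" — it identifies a genuine gap in the lemma as stated and in the paper's one-line proof. The comparison $(t-u)^{2H_u-1}\le(t-u)^{2\varepsilon-1}$ does reverse when $t-u>1$, and the displayed integral bound $\int_0^t K(t,u)^2\,du\le t^{2\varepsilon}/(2\varepsilon c_\Gamma^2)$ can actually fail for large $t$ (e.g.\ $H_u\equiv 1/2$, $t=100$, $\varepsilon=0.1$ gives $\int_0^t(t-u)^0\,du=100$ versus $100^{0.2}/0.2\approx 12.6$). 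The paper silently assumes $T\le 1$; your repair — splitting at $t-u=1$ and using $2H_u-1\le 0$ on the far piece to get the bound $c_\Gamma^{-2}\big(\tfrac{1}{2\varepsilon}+T\big)$ — is the standard fix and preserves everything the lemma is used for (square-integrability of the kernel and the $L^2$-continuity argument in Proposition~\ref{prop:V-Gauss}). Your explicit constants $c_\Gamma=1$ and $C_\Gamma\le\sqrt{\pi}$ are also correct, since $\Gamma$ is decreasing on $(0,1.4616\ldots)\supset[\varepsilon+1/2,1]$.
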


\begin{proof}
Continuity of $\Gamma(\cdot)$ on the compact set $[\varepsilon,1/2]+1/2
= [\varepsilon+1/2,1]$ gives deterministic bounds $c_\Gamma,C_\Gamma$.
Then $H_u\ge\varepsilon$ implies $(t-u)^{2H_u-1}\le (t-u)^{2\varepsilon-1}$,
and the integral is elementary.
\end{proof}

\begin{proposition}[Gaussianity and continuity of $V$]\label{prop:V-Gauss}
For each fixed $t$, conditional on the $\sigma$-field generated by $\{H_u\}_{u\le t}$,
$V_t$ is centered Gaussian with variance
\[
A_t \;:=\; \int_0^t K(t,u)^2\,du .
\]
Moreover, $V$ admits a continuous modification on $[0,T]$.
\end{proposition}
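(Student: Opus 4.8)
The plan is to prove the two assertions separately, and the second one is where the work lies. For the conditional Gaussianity, set $\mathcal H_t:=\sigma(H_u:u\le t)$. Conditioning on $\mathcal H_t$ freezes the kernel $u\mapsto K(t,u)$ into an $\mathcal H_t$-measurable --- hence, under the conditioning, deterministic --- element of $L^2(0,t)$ (Lemma~\ref{lem:K-L2} even gives a deterministic $L^2$ bound). Then $V_t$ is, conditionally, a Wiener integral of a deterministic function against the Brownian motion $Z$, so it is centered Gaussian with variance $A_t=\int_0^t K(t,u)^2\,du$ by the Itô isometry. Spelled out: approximate the frozen kernel in $L^2(0,t)$ by simple functions $K^{(n)}=\sum_i K(t,t^n_i)\mathbf 1_{(t^n_i,t^n_{i+1}]}$; then $V^{(n)}_t=\sum_i K(t,t^n_i)\bigl(Z_{t^n_{i+1}}-Z_{t^n_i}\bigr)\to V_t$ in $L^2$, each $V^{(n)}_t$ is a finite linear combination of independent centered Gaussians with variance $\sum_i K(t,t^n_i)^2(t^n_{i+1}-t^n_i)\to A_t$, and an $L^2$-limit of centered Gaussians is centered Gaussian.

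The step I expect to be the real obstacle is justifying that, after conditioning on $\mathcal H_t$, the increments of $Z$ are still independent and centered Gaussian. Assumption~\ref{ass:adaptedH} asks only that $H$ be $\{\mathcal F_t\}$-adapted; being variance-driven, $H$ is in general entangled with $W$, hence with $Z$, so conditioning on the whole $H$-path need not leave $Z$ a Brownian motion, in which case $V_t\mid\mathcal H_t$ need not be Gaussian. One therefore needs either (i) that the construction builds $H$ from information not anticipating the innovations of $Z$, so that $Z$ remains a Brownian motion under $Q(\,\cdot\mid\sigma(H_u:u\le T))$, or (ii) a product-space setup in which the $H$-path is one coordinate and $(W,W^\perp)$ the independent other coordinate. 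I would state explicitly which of these is in force; granted it, the approximation argument above runs verbatim.

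For the continuous modification I would avoid conditioning and argue pathwise together with the Burkholder--Davis--Gundy (BDG) inequality. For $0\le s<t\le T$ write $V_t-V_s=I_1+I_2$, where
\[
I_1:=\int_s^t K(t,u)\,dZ_u,\qquad
I_2:=\int_0^s\bigl(K(t,u)-K(s,u)\bigr)\,dZ_u .
\]
Each $I_j$ is the terminal value of an $L^2$-martingale in its upper integration limit, with quadratic variation $\langle I_1\rangle=\int_s^t K(t,u)^2\,du$ and $\langle I_2\rangle=\int_0^s\bigl(K(t,u)-K(s,u)\bigr)^2\,du$, so BDG gives $\mathbb E|I_j|^{2p}\le C_p\,\mathbb E\bigl[\langle I_j\rangle^p\bigr]$ for every $p\ge1$, and it suffices to bound the two brackets pathwise. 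For $\langle I_1\rangle$, Lemma~\ref{lem:K-L2} gives $\langle I_1\rangle\le c_\Gamma^{-2}\int_s^t (t-u)^{2\varepsilon-1}\,du=(t-s)^{2\varepsilon}/(2\varepsilon c_\Gamma^2)$. For $\langle I_2\rangle$, with exponent $H_u-\tfrac12\in[\varepsilon-\tfrac12,0]$ confined to a compact subinterval of $(-\tfrac12,0]$, a routine estimate --- split the $u$-integral at $u=\max\{0,2s-t\}$, bounding $\bigl|(t-u)^{H_u-1/2}-(s-u)^{H_u-1/2}\bigr|$ by $(s-u)^{H_u-1/2}$ on $\{s-u<t-s\}$ and by $|H_u-\tfrac12|\,(s-u)^{H_u-3/2}(t-s)$ (mean value theorem) on $\{s-u\ge t-s\}$, and using $|H_u-\tfrac12|\le\tfrac12$ together with $2H_u\ge2\varepsilon$ --- yields $\langle I_2\rangle\le C_{\varepsilon,T}\,(t-s)^{2\varepsilon}$, uniformly over all admissible $H$-paths. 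Hence $\mathbb E|V_t-V_s|^{2p}\le C_{p,\varepsilon,T}\,(t-s)^{2p\varepsilon}$ for every $p\ge1$. Choosing $p>1/(2\varepsilon)$ makes $2p\varepsilon>1$, and Kolmogorov--Chentsov then produces a modification of $V$ that is a.s.\ $\gamma$-Hölder on $[0,T]$ for every $\gamma<\varepsilon$, in particular continuous. The only care needed in this part is keeping the power-difference constant independent of the random $H$-path, which the compactness of the exponent range secures.
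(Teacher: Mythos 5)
Your proof is correct and on both counts more careful than the paper's own argument, which takes a different and in places incomplete route. For the Gaussianity, the paper simply asserts that, given the path of $H$, the kernel is deterministic and $V_t$ is therefore a Wiener integral with respect to $Z$, hence Gaussian with variance $A_t$; it does not address the point you flag, namely that conditioning on $\sigma(H_u:u\le t)$ need not leave $Z$ a Brownian motion when $H$ is merely $\{\mathcal F_t\}$-adapted (and in Section~\ref{sec:stochH} $H$ is explicitly built from $V$, hence from $Z$, so the concern is not hypothetical). Your observation identifies a genuine gap in the statement as written: one needs either a product-space construction with $H$ independent of $(W,W^\perp)$, or to read the conclusion as describing the law of $V_t$ for a frozen admissible $H$-path rather than a true conditional law. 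For the continuity, the two arguments genuinely diverge: the paper only shows $\mathbb E[(V_t-V_s)^2\mid H]\to 0$ as $t\downarrow s$ by dominated convergence and then invokes Kolmogorov's criterion, which is insufficient as stated, since stochastic ($L^2$-) continuity does not produce a continuous modification without a quantitative bound of the form $\mathbb E|V_t-V_s|^\alpha\le C|t-s|^{1+\beta}$ with $\beta>0$. Your decomposition $V_t-V_s=I_1+I_2$, the pathwise bracket bounds $\langle I_1\rangle+\langle I_2\rangle\le C_{\varepsilon,T}(t-s)^{2\varepsilon}$ uniform over admissible $H$-paths (the split at $u=\max\{0,2s-t\}$ and the mean-value estimate do give the stated rate, with constants controlled by $\varepsilon$ and $T$ because $H_u-\tfrac12$ stays in $[\varepsilon-\tfrac12,0]$), and the BDG step with $p>1/(2\varepsilon)$ supply exactly the missing estimate, and as a bonus yield $\gamma$-H\"older paths for every $\gamma<\varepsilon$. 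Your BDG route also has the advantage of not relying on the possibly problematic conditional Gaussianity, whereas the paper's argument, even if repaired by Gaussian moment scaling, would still need your quantitative $(t-s)^{2\varepsilon}$ rate in place of its qualitative limit.
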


\begin{proof}
Given the path $\{H_u\}_{u\le t}$, $u\mapsto K(t,u)$ is deterministic and square-integrable
(Lemma~\ref{lem:K-L2}). Hence $V_t$ is an Itô integral of a deterministic (given $H$) kernel
with respect to $Z$, hence Gaussian with mean $0$ and variance $A_t$.
For continuity, note that for $0<s<t\le T$,
\[
\mathbb E\big[(V_t-V_s)^2\mid \{H_u\}_{u\le t}\big]
=\int_0^s \!\!\big(K(t,u)-K(s,u)\big)^2du\;+\;\int_s^t \!\!K(t,u)^2du .
\]
Since $H$ has bounded variation and takes values in a compact interval,
$t\mapsto K(t,\cdot)$ is continuous in $L^2$ by dominated convergence
(majorant $(t-u)^{\varepsilon-\frac12}$ on $u\in(0,t)$). Thus the RHS $\to0$ as $t\downarrow s$,
uniformly on compacts. Kolmogorov’s criterion yields a continuous modification.
\end{proof}

\begin{definition}[Volatility and asset dynamics]\label{def:sigmaS}
For constants $V_0>0$, $\nu\in\mathbb R$, and risk-free rate $r\in\mathbb R$, define
\[
\sigma_t \;:=\; \sqrt{V_0}\;
\exp\!\Big(\nu V_t \;-\; \frac{\nu^2}{2}\,A_t\Big),
\qquad
dS_t \;=\; rS_t\,dt \;+\; S_t\,\sigma_t\,dW_t,
\qquad S_0>0.
\]
\end{definition}

\begin{theorem}[Well-posedness of $S$]\label{thm:S-wellposed}
Under Assumptions \ref{ass:adaptedH}–\ref{ass:WZ},
the SDE for $S$ has the unique strong solution
\[
S_t \;=\; S_0\exp\!\Big(r t - \tfrac12\int_0^t \sigma_s^2 ds
\;+\; \int_0^t \sigma_s\,dW_s\Big), \qquad t\in[0,T].
\]
\end{theorem}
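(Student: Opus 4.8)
The plan is to exhibit the claimed exponential as the unique strong solution by the standard recipe for linear SDEs with random coefficients; the only non-routine point will be the a.s.\ finiteness of $\int_0^T\sigma_s^2\,ds$, which is what makes the driving stochastic integral well-defined.

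First I would record the regularity of the coefficient $\sigma$. It is $\{\mathcal F_t\}$-adapted (a deterministic function of $V_t$ and $A_t$, both adapted), and by Proposition~\ref{prop:V-Gauss} together with the $L^2$-continuity of $t\mapsto K(t,\cdot)$ established there (which also makes $t\mapsto A_t$ continuous with $A_0=0$), $\sigma$ has an a.s.\ continuous modification; hence $\sigma$ is progressively measurable and a.s.\ locally bounded. Next I would prove the integrability bound. Conditioning on $\mathcal F^H_s:=\sigma(\{H_u\}_{u\le s})$, Proposition~\ref{prop:V-Gauss} gives $V_s\mid\mathcal F^H_s\sim\mathcal N(0,A_s)$, so
\[
\mathbb E\big[\sigma_s^2\mid\mathcal F^H_s\big]
= V_0\,e^{-\nu^2 A_s}\,\mathbb E\big[e^{2\nu V_s}\mid\mathcal F^H_s\big]
= V_0\,e^{\nu^2 A_s}
\le V_0\,\exp\!\Big(\tfrac{\nu^2 T^{2\varepsilon}}{2\varepsilon c_\Gamma^2}\Big),
\]
using the uniform bound $A_s\le T^{2\varepsilon}/(2\varepsilon c_\Gamma^2)$ from Lemma~\ref{lem:K-L2}. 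Taking expectations and applying Tonelli to the nonnegative jointly measurable map $(s,\omega)\mapsto\sigma_s(\omega)^2$ gives $\mathbb E\big[\int_0^T\sigma_s^2\,ds\big]<\infty$, hence $\int_0^T\sigma_s^2\,ds<\infty$ a.s. Consequently $\int_0^\cdot\sigma_s\,dW_s$ is a well-defined (in fact square-integrable) martingale, and
\[
S_t:=S_0\exp\!\Big(rt-\tfrac12\int_0^t\sigma_s^2\,ds+\int_0^t\sigma_s\,dW_s\Big)
\]
is a well-defined, a.s.\ continuous, strictly positive semimartingale with the prescribed initial value.

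To verify that $S$ solves the SDE, write $S_t=S_0e^{X_t}$ with $dX_t=(r-\tfrac12\sigma_t^2)\,dt+\sigma_t\,dW_t$ and $d\langle X\rangle_t=\sigma_t^2\,dt$; Itô's formula applied to $x\mapsto S_0e^x$ yields
\[
dS_t = S_t\,dX_t + \tfrac12 S_t\,d\langle X\rangle_t
= S_t\big[(r-\tfrac12\sigma_t^2)+\tfrac12\sigma_t^2\big]dt + S_t\sigma_t\,dW_t
= rS_t\,dt + S_t\sigma_t\,dW_t,
\]
as required. For uniqueness, let $\widetilde S$ be any solution (a continuous semimartingale with $\int_0^T\widetilde S_s^2\sigma_s^2\,ds<\infty$ a.s.). Since $S>0$, $S^{-1}=S_0^{-1}e^{-X}$ is again a continuous semimartingale, and an integration-by-parts computation for $Y:=\widetilde S\,S^{-1}$ shows the $dt$- and $dW_t$-contributions cancel identically, so $dY_t=0$ and $Y_t\equiv Y_0=1$, i.e.\ $\widetilde S\equiv S$. (Alternatively, a localization argument along the stopping times $\tau_n:=\inf\{t:\sigma_t>n\}\wedge T$, with $\tau_n\uparrow T$ by continuity of $\sigma$, reduces matters to the globally Lipschitz case on each $[0,\tau_n]$.)

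The main obstacle is precisely the a.s.\ finiteness of $\int_0^T\sigma_s^2\,ds$: without the conditional-Gaussian law of $V_s$ from Proposition~\ref{prop:V-Gauss} and the deterministic upper bound on $A_s$ from Lemma~\ref{lem:K-L2} one could not even assert that the stochastic integral defining the candidate solution exists. Once those two ingredients are in place, the verification via Itô's formula and the uniqueness argument are entirely routine.
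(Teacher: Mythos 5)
Your proof is correct and follows essentially the same route as the paper: $\sigma$ is adapted and continuous, so the linear SDE has the Dol\'eans--Dade exponential as its unique strong solution, and you merely spell out the It\^o verification and the $\widetilde S\,S^{-1}$ uniqueness computation that the paper delegates to ``standard SDE theory.'' One minor remark: the conditional-Gaussian $L^2$ estimate is more than is needed, since a.s.\ finiteness of $\int_0^T\sigma_s^2\,ds$ already follows from pathwise continuity (hence boundedness) of $\sigma$ on the compact interval $[0,T]$, though your bound does give the extra information that $\int_0^\cdot\sigma_s\,dW_s$ is a true square-integrable martingale.
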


\begin{proof}
By Proposition~\ref{prop:V-Gauss} and Lemma~\ref{lem:K-L2}, $V$ is continuous and adapted,
hence $\sigma_t$ is adapted and continuous. For fixed $\omega$, the map $x\mapsto \sigma_t(\omega)\,x$
is globally Lipschitz and of linear growth in $x$, so standard SDE theory yields a unique strong
solution, explicitly given by the Doléans–Dade exponential.
\end{proof}

\subsection{Martingale Properties and Risk-Neutral Measure}

A crucial aspect of our model is ensuring the no-arbitrage condition through proper martingale properties.

\begin{lemma}[Conditional second moment of $\sigma_t$]
Let
\[
\sigma_t = \sqrt{V_0}\,\exp\!\Big(\nu V_t - \tfrac{\nu^2}{2}A_t\Big), 
\qquad 
A_t := \int_0^t K(t,u)^2\,du,
\]
where, conditional on $\{H_u\}_{u\le t}$, the Gaussian Volterra driver satisfies 
$V_t \mid \{H_u\}_{u\le t} \sim \mathcal N(0,A_t)$. Then
\[
\mathbb{E}\!\left[\sigma_t^2 \,\middle|\, \{H_u\}_{u\le t}\right]
= V_0 \exp\!\big(\nu^2 A_t\big).
\]
\end{lemma}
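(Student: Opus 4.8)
The plan is to reduce the identity to the moment generating function of a Gaussian. Write $\mathcal G_t := \sigma(\{H_u\}_{u\le t})$. First I would rewrite the square of the volatility as
\[
\sigma_t^2 \;=\; V_0\,\exp\!\big(2\nu V_t - \nu^2 A_t\big) \;=\; V_0\, e^{-\nu^2 A_t}\,\exp\!\big(2\nu V_t\big).
\]
By Lemma~\ref{lem:K-L2}, $A_t = \int_0^t K(t,u)^2\,du \le t^{2\varepsilon}/(2\varepsilon\,c_\Gamma^2) < \infty$ a.s., and since $u\mapsto K(t,u)$ depends on $\omega$ only through the Hurst path, $A_t$ is $\mathcal G_t$-measurable; conditional on $\mathcal G_t$ it is a finite constant. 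Hence the factor $V_0 e^{-\nu^2 A_t}$ can be pulled out of the conditional expectation.

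The remaining step is to evaluate $\mathbb E[\exp(2\nu V_t)\mid \mathcal G_t]$. By Proposition~\ref{prop:V-Gauss}, conditional on $\mathcal G_t$ (equivalently, on $\{H_u\}_{u\le t}$), $V_t$ is centered Gaussian with variance $A_t$. Applying the Gaussian Laplace-transform identity $\mathbb E[e^{\lambda X}] = e^{\lambda^2 v/2}$ for $X\sim\mathcal N(0,v)$ with $\lambda = 2\nu$ and $v = A_t$ gives $\mathbb E[\exp(2\nu V_t)\mid \mathcal G_t] = \exp(2\nu^2 A_t)$. Combining the two displays,
\[
\mathbb E\!\left[\sigma_t^2 \,\middle|\, \mathcal G_t\right]
\;=\; V_0\, e^{-\nu^2 A_t}\cdot e^{2\nu^2 A_t}
\;=\; V_0\, e^{\nu^2 A_t},
\]
which is the asserted identity.

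There is no genuine obstacle here: the computation is the standard lognormal second-moment formula. The only two points that deserve a word of justification — both immediate from results already established — are the a.s.\ finiteness of $A_t$ (so that the conditional moment generating function is finite and $\sigma_t^2$ is genuinely integrable), supplied by Lemma~\ref{lem:K-L2}, and the $\mathcal G_t$-measurability of $A_t$, which permits treating it as a constant under $\mathbb E[\,\cdot\mid\mathcal G_t]$. As a side check, the same argument with $\lambda=\nu$ yields $\mathbb E[\sigma_t\mid\mathcal G_t]=\sqrt{V_0}$, confirming that the $-\tfrac{\nu^2}{2}A_t$ term in Definition~\ref{def:sigmaS} is the correct martingale-type normalization; this is not needed for the statement but serves as a sanity check on the algebra.
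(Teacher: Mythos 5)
Your argument is the same as the paper's: factor $\sigma_t^2 = V_0 e^{-\nu^2 A_t} e^{2\nu V_t}$, pull out the $\mathcal G_t$-measurable factor, and apply the Gaussian moment generating function with parameter $2\nu$. The extra remarks on finiteness and measurability of $A_t$ are correct and harmless; the proof is right and matches the paper's.
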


\begin{proof}
Condition on $\{H_u\}_{u\le t}$ so that $A_t$ is deterministic and $V_t \sim \mathcal N(0,A_t)$. 
Then $\sigma_t^2 = V_0 \exp\!\big(2\nu V_t - \nu^2 A_t\big).$
Using the moment generating function of a centered Gaussian random variable,
\[
\mathbb E\!\left[e^{\theta V_t}\,\middle|\,H\right]
= \exp\!\left(\tfrac{1}{2}\theta^2 A_t\right),
\]
with $\theta = 2\nu$, we obtain
\[
\mathbb E\!\left[\sigma_t^2 \mid H\right]
= V_0 e^{-\nu^2 A_t}\, \exp\!\left(\tfrac{(2\nu)^2}{2}A_t\right)
= V_0 \exp\!\big(\nu^2 A_t\big).
\]
\end{proof}

\begin{remark}
More generally, for any $p \in \mathbb R$,
\[
\mathbb E\!\left[\sigma_t^{\,p} \mid H\right]
= V_0^{p/2}\, \exp\!\Big(\tfrac{p(p-2)}{2}\,\nu^2 A_t\Big).
\]
\end{remark}

\begin{proposition}[Discounted price is a local martingale]\label{prop:localM}
The discounted process $M_t:=e^{-rt}S_t$ is a nonnegative local martingale and hence a supermartingale.
\end{proposition}

\begin{proof}
Itô’s formula gives $dM_t = M_t\,\sigma_t\,dW_t$, so $M$ is a local martingale.
Nonnegativity follows from the explicit solution in Theorem \ref{thm:S-wellposed}.
\end{proof}

\begin{assumption}[Integrability for Novikov]\label{ass:Novikov}
On the horizon $[0,T]$,
\(
\mathbb E\!\Big[\exp\!\big((1/2) \int_0^T \sigma_t^2 dt\big)\Big] < \infty .
\)
\end{assumption}

\begin{proposition}[Martingale property: two regimes]\label{prop:martingale}
Consider $M_t=e^{-rt}S_t$ on $[0,T]$.
\begin{enumerate}
\item[(a)] (\textbf{Uncorrelated case}) If $\rho=0$ in Assumption~\ref{ass:WZ}, then $M$ is a true $Q$-martingale on $[0,T]$.
\item[(b)] (\textbf{General case}) For arbitrary $\rho\in[-1,1]$, if Assumption~\ref{ass:Novikov} holds, then $M$ is a true $Q$-martingale on $[0,T]$.
\end{enumerate}
\end{proposition}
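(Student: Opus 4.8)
\medskip
\noindent\textbf{Proof strategy.} The plan is to reduce both claims to a single statement about the stochastic exponential. By Theorem~\ref{thm:S-wellposed} one has $M_t = e^{-rt}S_t = S_0\,\mathcal E_t$ with $\mathcal E_t := \exp\!\big(\int_0^t \sigma_s\,dW_s - \tfrac12\int_0^t \sigma_s^2\,ds\big)$, and by Proposition~\ref{prop:localM} the process $M$ --- hence $\mathcal E$ --- is a nonnegative local martingale and therefore a supermartingale. Since a supermartingale with constant expectation is automatically a martingale, it suffices to establish $\mathbb E[\mathcal E_t] = 1$ for every $t \in [0,T]$; this is the target in both regimes.

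For part (b) I would invoke Novikov's criterion directly. Assumption~\ref{ass:Novikov} states precisely that $\mathbb E\big[\exp\big(\tfrac12\int_0^T \sigma_t^2\,dt\big)\big] < \infty$, which is Novikov's condition for the exponential local martingale $\mathcal E$ with integrand $\sigma$ (adapted and continuous by Theorem~\ref{thm:S-wellposed}). Hence $\mathcal E$ is a uniformly integrable true martingale on $[0,T]$, and so is $M$; this disposes of (b) in one line once the reduction above is in place.

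For part (a) I would argue by conditioning on the $\sigma$-field $\mathcal G := \sigma\big(W^\perp_s : s\le T\big)\vee\sigma\big(H_s : s \le T\big)$. When $\rho = 0$ we have $Z = W^\perp$ in Assumption~\ref{ass:WZ}, so the Volterra driver $V$, the variance process, and --- through the EWMA recursion defining it --- the Hurst path $H$ are all functionals of $W^\perp$ alone; consequently $\sigma$ is $\mathcal G$-measurable, while $W$ is independent of $\mathcal G$ and hence remains a standard Brownian motion under $Q(\cdot\mid\mathcal G)$. Conditionally on $\mathcal G$ the path $s\mapsto\sigma_s$ is a fixed continuous (Proposition~\ref{prop:V-Gauss}), therefore $L^2(0,T)$, function, so $\int_0^t\sigma_s\,dW_s$ is, given $\mathcal G$, centered Gaussian with variance $\int_0^t\sigma_s^2\,ds$; the Gaussian moment generating function then gives $\mathbb E\big[\mathcal E_t\mid\mathcal G\big] = \exp\big(\tfrac12\int_0^t\sigma_s^2\,ds - \tfrac12\int_0^t\sigma_s^2\,ds\big) = 1$ a.s. Taking expectations yields $\mathbb E[\mathcal E_t] = 1$ for all $t$, which combined with the first paragraph gives the true martingale property of $M$ on $[0,T]$.

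The main obstacle is the independence claim underlying part (a): one must verify that conditioning on $\mathcal G$ does not disturb $W$, i.e.\ that the solution map producing the roughness path $H$ from the variance driver genuinely does not feed back on $W$ in the uncorrelated regime. This decoupling is specific to the rBergomi construction and should be spelled out carefully (or recorded as a standing hypothesis on $H$ when $\rho=0$). Everything else --- Novikov's theorem in (b), the Gaussian-exponential computation in (a), and the ``nonnegative local martingale with constant mean is a martingale'' step --- is routine.
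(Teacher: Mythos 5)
Your proposal is correct and follows essentially the same route as the paper: Novikov's criterion for (b), and for (a) conditioning on the $\sigma$-field generated by $W^\perp$ and $H$ so that $\sigma$ becomes a deterministic $L^2$ integrand, the stochastic exponential has conditional expectation one, and the constant-mean nonnegative-local-martingale argument closes the proof. The caveat you flag --- that one must know $H$ (hence $\sigma$) is genuinely independent of $W$ when $\rho=0$ --- is a real point, but the paper's own proof makes exactly the same implicit assumption, so your write-up is if anything slightly more careful.
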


\begin{proof}
(a) When $\rho=0$, $Z=W^\perp$ is independent of $W$. The process $\sigma$ is $\{\mathcal F_t\}$-adapted
and measurable with respect to the sigma-field generated by $Z$ (and $H$), which is independent of $W$.
Let $\mathcal G:=\sigma(\{H_u\}_{u\le T},\{Z_u\}_{u\le T})$. Conditional on $\mathcal G$, the process
$\sigma$ is deterministic, hence the Doléans exponential
$\mathcal E_t:=\exp\big(\int_0^t \sigma_s dW_s - (1/2)\int_0^t \sigma_s^2 ds\big)$
satisfies $\mathbb E[\mathcal E_t\mid \mathcal G]=1$ for all $t$ (Gaussian integral with deterministic
integrand). Therefore $\mathbb E[M_t\mid \mathcal G]=S_0$ for all $t$, and taking expectations yields
$\mathbb E[M_t]=S_0$, i.e., $M$ is a true martingale.

(b) For general $\rho$, $\sigma$ may depend on $W$, so the argument in (a) is not available. Under
Assumption~\ref{ass:Novikov}, Novikov’s criterion applies to the continuous local martingale
$\int_0^\cdot \sigma_s dW_s$, hence $\mathcal E_t$ is a true martingale with expectation $1$.
Therefore $\mathbb E[M_t]=S_0$ and $M$ is a true martingale.
\end{proof}

\begin{remark}[On Assumption~\ref{ass:Novikov}]
The condition is \emph{sufficient} (not necessary). It may fail for some parameter ranges because
$\sigma_t$ is lognormal-in-\(V_t\) and $e^{\frac12\int \sigma_t^2 dt}$ can have heavy tails.
However, part (a) provides a clean \emph{unconditional} martingale result whenever $\rho=0$
(a common benchmark in empirical sections). For $\rho\neq 0$, one can verify
Assumption~\ref{ass:Novikov} numerically on the pricing horizon or enforce it by truncation/localization.
\end{remark}

\subsection{Stochastic Hurst dynamics via EWMA}\label{sec:stochH}

We now allow the roughness index $H_t$ itself to evolve stochastically, driven by
the variance process through an exponentially weighted moving average (EWMA).
This couples the volatility-of-volatility to realized roughness while retaining
a well-defined adapted kernel.

\begin{assumption}[Stochastic Hurst path]\label{ass:Hstoch}
Fix $\varepsilon\in(0,1/2)$ and $H_{\max}\in(\varepsilon,1/2]$.
Let $\{V_t\}_{t\ge0}$ be defined by \eqref{def:KandV}. 
Set $H_0\in[\varepsilon,H_{\max}]$, and for $t>0$ define
\[
H_t \;=\; \min\!\Big\{\,
\max\!\Big\{\, \alpha \Big(\frac{\Theta_t}{\theta_{\mathrm{ref}}}\Big)^\gamma+\beta,\ \varepsilon \Big\},\
H_{\max}\,\Big\},
\]
where $\Theta_t=\lambda\int_0^t e^{-\lambda(t-s)}V_s\,ds$ is the EWMA of variance,
and $\alpha,\beta,\gamma,\lambda,\theta_{\mathrm{ref}}$ are fixed constants.
\end{assumption}

By construction $H_t$ is $\{\mathcal F_t\}$-adapted, càdlàg, and bounded in $[\varepsilon,H_{\max}]$.

\begin{definition}[Variance driver with stochastic Hurst]\label{def:Vstoch}
With $H_t$ as in Assumption~\ref{ass:Hstoch}, define the kernel
\[
K(t,u)=\frac{(t-u)^{H_u-\frac12}}{\Gamma(H_u+1/2)}\mathbf{1}_{\{u<t\}},
\]
and the Gaussian driver
\[
V_t=\int_0^t K(t,u)\,dZ_u.
\]
\end{definition}

\begin{proposition}[Well-posedness]\label{prop:wellposed-stochH}
For each $t\le T$, $V_t$ is centered Gaussian conditional on $\{H_u\}_{u\le t}$ with variance
$A_t=\int_0^t K(t,u)^2du <\infty$. The process $V$ admits a continuous modification.
Given $V$, the asset price $S$ satisfies
\[
dS_t=rS_t\,dt+S_t\sigma_t\,dW_t,\qquad
\sigma_t=\sqrt{V_0}\exp\!\Big(\nu V_t-\tfrac{\nu^2}{2}A_t\Big),
\]
which has the unique strong solution
\[
S_t=S_0\exp\!\Big(rt-\tfrac12\int_0^t\sigma_s^2 ds+\int_0^t\sigma_s\,dW_s\Big).
\]
\end{proposition}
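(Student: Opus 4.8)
The plan is to show that the EWMA prescription of Assumption~\ref{ass:Hstoch} genuinely defines a Hurst process $H$ on $[0,T]$ which, jointly with the driver $V$ of Definition~\ref{def:Vstoch}, satisfies Assumption~\ref{ass:adaptedH}; once this is done, each of the three assertions is a specialisation of a result already proved. Indeed, $A_t=\int_0^t K(t,u)^2\,du<\infty$ is Lemma~\ref{lem:K-L2}, valid because $[\varepsilon,H_{\max}]\subseteq[\varepsilon,1/2]$ forces $(t-u)^{2H_u-1}\le(t-u)^{2\varepsilon-1}$ while $\Gamma(H_u+\tfrac12)$ stays bounded away from $0$ on the compact range; the conditional centred Gaussianity of $V_t$ with variance $A_t$ and the existence of a continuous modification of $V$ are Proposition~\ref{prop:V-Gauss}, whose proof uses only that conditionally on the realised Hurst path $K(t,\cdot)$ is deterministic and square-integrable with $t\mapsto K(t,\cdot)$ continuous in $L^2$ (dominated-convergence majorant $(t-u)^{\varepsilon-1/2}$, available since $H$ takes values in a compact interval and has bounded variation); and the explicit Doléans exponential form $S_t=S_0\exp(rt-\tfrac12\int_0^t\sigma_s^2\,ds+\int_0^t\sigma_s\,dW_s)$ together with strong existence and uniqueness is Theorem~\ref{thm:S-wellposed}, since $\sigma$ is continuous and adapted and $x\mapsto\sigma_t(\omega)x$ is pathwise globally Lipschitz of linear growth. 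So the real content is (i) constructing the coupled pair $(H,V)$, and (ii) verifying that $H$ is of bounded variation.

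For (i), the obstacle is the implicit loop in which $H$ determines the kernel $K_H$, hence $V^H$, hence the EWMA $\Theta^H$, hence a new $H$. I would close this loop by a causal fixed-point argument on a short horizon $[0,\delta]$ with $\delta\le1$. On the complete space of continuous adapted processes $H\colon[0,\delta]\to[\varepsilon,H_{\max}]$ with norm $\|H\|_\delta=(\mathbb E\sup_{t\le\delta}|H_t|^2)^{1/2}$, set
\[
\Phi(H)_t=\clip{\alpha(\Theta^H_t/\theta_{\mathrm{ref}})^\gamma+\beta}{\varepsilon}{H_{\max}},\qquad
\Theta^H_t=\lambda\!\int_0^t e^{-\lambda(t-s)}V^H_s\,ds,\qquad
V^H_t=\int_0^t K_H(t,u)\,dZ_u,
\]
with $K_H(t,u)=(t-u)^{H_u-1/2}/\Gamma(H_u+1/2)$; the clamp keeps the image in $[\varepsilon,H_{\max}]$, and $\Phi$ preserves adaptedness and continuity. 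The key estimate is the pathwise bound
\[
\big|K_H(t,u)-K_{\tilde H}(t,u)\big|\;\le\;C\,|H_u-\tilde H_u|\,(t-u)^{\varepsilon-1/2}\big(1+|\log(t-u)|\big),\qquad 0\le u<t\le\delta,
\]
obtained from the mean value theorem applied to $a\mapsto(t-u)^a$ together with Lipschitzness of $x\mapsto 1/\Gamma(x+1/2)$ on $[\varepsilon,H_{\max}]$. Since $\int_0^t(1+|\log(t-u)|)^2(t-u)^{2\varepsilon-1}\,du\downarrow0$ as $t\downarrow0$, Itô's isometry gives $\mathbb E|V^H_t-V^{\tilde H}_t|^2\le C(t)\|H-\tilde H\|_\delta^2$ with $C(t)\downarrow0$; propagating this through $|\Theta^H_t-\Theta^{\tilde H}_t|\le\lambda\int_0^t|V^H_s-V^{\tilde H}_s|\,ds$ and through the $1$-Lipschitz clamp composed with the nonlinearity $x\mapsto\alpha(x/\theta_{\mathrm{ref}})^\gamma+\beta$ (globally Lipschitz on the a.s.\ bounded range of $\Theta$ when $\gamma\ge1$) yields $\|\Phi(H)-\Phi(\tilde H)\|_\delta\le\kappa(\delta)\|H-\tilde H\|_\delta$ with $\kappa(\delta)\to0$ as $\delta\to0$. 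For $\delta$ small enough $\Phi$ is a contraction and has a unique fixed point on $[0,\delta]$; concatenating over $\lceil T/\delta\rceil$ blocks $[k\delta,(k+1)\delta]$ extends the solution to $[0,T]$, since on each block the EWMA carries the already-determined past only through $\Theta_{k\delta}$ and an exponentially damped convolution of $V$, so one faces the same fixed-point problem with the same contraction constant.

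For (ii), along the Picard iterates $H^{(n)}$ (each continuous, adapted, compact-valued) Proposition~\ref{prop:V-Gauss} gives continuous $V^{(n)}$, hence $C^1$ moving averages $\Theta^{(n)}$, hence bounded-variation $H^{(n+1)}$; once the fixed point $H$ is in hand, $V=V^H$ again has a continuous modification by the majorant argument of Proposition~\ref{prop:V-Gauss}, so $\Theta$ solves the linear ODE $\dot{\Theta}_t=\lambda(V_t-\Theta_t)$, $\Theta_0=0$, with continuous right-hand side and is therefore $C^1([0,T])$ a.s.; composing with the continuous, locally-bounded-variation map $x\mapsto\clip{\alpha(x/\theta_{\mathrm{ref}})^\gamma+\beta}{\varepsilon}{H_{\max}}$ shows $H$ is continuous --- a fortiori càdlàg --- and of finite total variation on $[0,T]$ a.s. Thus $H$ satisfies Assumption~\ref{ass:adaptedH}, and the three assertions follow as in the first paragraph.

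I expect the main obstacle to be the contraction estimate for $\Phi$: the map $H\mapsto V^H$ is nonlinear and singular in the exponent $(t-u)^{H_u-1/2}$, so proving the logarithmic-factor bound on $|K_H-K_{\tilde H}|$ and then checking that the induced kernel difference is square-integrable with a constant vanishing as $\delta\downarrow0$ --- cleanly enough to survive composition with $\Theta$ and the clamp --- is the delicate point. Everything downstream (the block concatenation, the $C^1$ regularity of $\Theta$ and hence the bounded variation of $H$, and the three conclusions themselves) is routine once the groundwork already in the paper is in place.
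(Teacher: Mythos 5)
Your proposal is correct in its endpoint but takes a genuinely different --- and substantially more ambitious --- route than the paper. The paper's proof is a three-line citation: Lemma~\ref{lem:K-L2} applies ``with random but adapted exponent'' to give $A_t<\infty$ and the It\^o-integral construction of $V_t$, the $L^2$-continuity argument of Proposition~\ref{prop:V-Gauss} gives the continuous modification, and Theorem~\ref{thm:S-wellposed} gives the strong solution for $S$; it takes the pair $(H,V)$ as already given. You correctly identify that this presupposes something the paper never establishes: Assumption~\ref{ass:Hstoch} defines $H$ through $\Theta$, which is built from $V$, which is built from the kernel $K$, which depends on $H$ --- a genuinely implicit system, causal but not trivially solvable in continuous time. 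Your Picard contraction on short blocks, driven by the mean-value estimate $|K_H(t,u)-K_{\tilde H}(t,u)|\le C|H_u-\tilde H_u|(t-u)^{\varepsilon-1/2}(1+|\log(t-u)|)$ and the vanishing of the logarithmically corrected integral as $\delta\downarrow 0$, is a sound way to close this loop, and your ODE argument $\dot\Theta_t=\lambda(V_t-\Theta_t)$ giving $C^1$ regularity of $\Theta$ and hence bounded variation of $H$ supplies the verification of Assumption~\ref{ass:adaptedH} that the paper leaves implicit. Two caveats on your contraction step deserve attention: since $V$ is a centered Gaussian driver, $\Theta_t$ can be negative, so $(\Theta_t/\theta_{\mathrm{ref}})^\gamma$ is ill-defined for non-integer $\gamma$ and non-Lipschitz at the origin for $\gamma\in(0,1)$ --- your restriction to $\gamma\ge 1$ (or an integer, or a regularized power) is necessary, not cosmetic; and ``globally Lipschitz on the a.s.\ bounded range of $\Theta$'' gives only a \emph{random} Lipschitz constant, so the $L^2$-contraction in $\|\cdot\|_\delta$ needs either a localization or a moment bound on that constant before $\kappa(\delta)\to 0$ is secured. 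What your route buys is an actual existence proof for the coupled model; what the paper's route buys is brevity, at the cost of silently assuming the object it reasons about exists.
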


\begin{proof}
Since $H_u\in[\varepsilon,H_{\max}]$, Lemma~\ref{lem:K-L2} applies with random but adapted
exponent. Thus $u\mapsto K(t,u)$ is $\mathcal F_u$-measurable and in $L^2(0,t)$, so $V_t$
is an Itô integral. Gaussianity and variance $A_t$ are immediate. Continuity follows from
the same $L^2$–continuity argument as in Proposition~\ref{prop:V-Gauss}. With $\sigma$
continuous and adapted, the $S$–SDE has a unique strong solution by standard theory.
\end{proof}

\begin{proposition}[Martingale property with stochastic $H$]\label{prop:mart-stochH}
Let $M_t=e^{-rt}S_t$.
\begin{enumerate}\itemsep2pt
\item[(a)] If $\rho=0$, then $M$ is a true $Q$-martingale on $[0,T]$.
\item[(b)] For arbitrary $\rho\in[-1,1]$, if
$\mathbb E\big[\exp((1/2)\int_0^T\sigma_t^2dt)\big]<\infty$, then $M$ is a true $Q$-martingale.
\end{enumerate}
\end{proposition}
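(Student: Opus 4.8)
The plan is to observe that the stochastic-Hurst setup of Assumption~\ref{ass:Hstoch} differs from the adapted-Hurst setup of Assumptions~\ref{ass:adaptedH}–\ref{ass:WZ} only in \emph{how} the càdlàg, bounded, adapted process $H$ is produced — here via the EWMA functional $\Theta_t$ of $V$ — and that the proofs of Proposition~\ref{prop:martingale}(a) and (b) never used any structural property of $H$ beyond adaptedness, càdlàg paths, and values in a compact subinterval of $(0,1/2]$. So the proof is essentially a verification that those hypotheses still hold, followed by a reference to the earlier argument. First I would note that Proposition~\ref{prop:wellposed-stochH} already gives us that $V$ is continuous and adapted, $\sigma_t=\sqrt{V_0}\exp(\nu V_t-\tfrac{\nu^2}{2}A_t)$ is continuous and adapted, and $M_t=e^{-rt}S_t$ is a nonnegative local martingale with $dM_t=M_t\sigma_t\,dW_t$ (Itô, exactly as in Proposition~\ref{prop:localM}).

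For part~(a), I would repeat the conditioning argument verbatim: take $\mathcal G:=\sigma(\{H_u\}_{u\le T},\{Z_u\}_{u\le T})$. When $\rho=0$ we have $Z=W^\perp$, which is independent of $W$; moreover $H$ — being a deterministic measurable functional of the path of $\Theta$, which in turn is a pathwise (Lebesgue) integral of $V$, which is $\sigma(\{Z_u\})$-measurable given nothing else enters the kernel but $Z$ — is $\mathcal G$-measurable, and $\mathcal G$ is independent of $W$. Hence, conditional on $\mathcal G$, the integrand $\sigma$ is deterministic and square-integrable on $[0,T]$ (its conditional law is that of a continuous Gaussian-driven process with locally bounded variance $A_t\le t^{2\varepsilon}/(2\varepsilon c_\Gamma^2)$), so the Doléans exponential $\mathcal E_t=\exp(\int_0^t\sigma_s\,dW_s-\tfrac12\int_0^t\sigma_s^2\,ds)$ is a genuine exponential-martingale of a deterministic integrand against Brownian motion and satisfies $\mathbb E[\mathcal E_t\mid\mathcal G]=1$. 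Since $M_t=S_0\,e^{-rt}\cdot e^{rt}\mathcal E_t=S_0\mathcal E_t$ wait — more carefully, $M_t=S_0\exp(-\tfrac12\int_0^t\sigma_s^2ds+\int_0^t\sigma_s dW_s)=S_0\mathcal E_t$, so $\mathbb E[M_t\mid\mathcal G]=S_0$, whence $\mathbb E[M_t]=S_0$; a nonnegative local martingale with constant expectation is a true martingale, giving (a).

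For part~(b) I would simply invoke Novikov's criterion for the continuous local martingale $N_t=\int_0^t\sigma_s\,dW_s$: the hypothesis $\mathbb E[\exp(\tfrac12\int_0^T\sigma_t^2\,dt)]<\infty$ is exactly Novikov's condition on $[0,T]$, so $\mathcal E=\mathcal E(N)$ is a true martingale with $\mathbb E[\mathcal E_t]=1$, hence $\mathbb E[M_t]=S_0$ for all $t\le T$ and $M$ is a true martingale. I would close by remarking, as in the text after Proposition~\ref{prop:martingale}, that (a) is unconditional while (b) trades generality of $\rho$ for the integrability assumption, which can be checked numerically or enforced by localization.

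The only genuinely delicate point — and the one I would be careful to state explicitly rather than gloss — is the $\mathcal G$-measurability of $H$ in part~(a): one must confirm that the EWMA map $V\mapsto\Theta_\cdot\mapsto H_\cdot$ introduces no new dependence on $W$. This is clear because $\Theta_t=\lambda\int_0^t e^{-\lambda(t-s)}V_s\,ds$ is a pathwise Lebesgue integral of the $\mathcal G$-measurable process $V$ (note $V$ is built from $Z$ and $H$ alone, and the construction is consistent: $H$ and $V$ are the fixed point of the coupled system, all of which lives on $\sigma(\{Z_u\})\vee\sigma(\{H_u\})=\mathcal G$), and the clip-and-power transformation $x\mapsto\min\{\max\{\alpha(x/\theta_{\mathrm{ref}})^\gamma+\beta,\varepsilon\},H_{\max}\}$ is a fixed Borel function. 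Everything else is a direct transcription of the proof of Proposition~\ref{prop:martingale}.
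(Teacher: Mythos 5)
Your proposal is correct and follows essentially the same route as the paper's proof: condition on $\sigma(Z)$ (and $H$) in the $\rho=0$ case to reduce to a Doléans exponential of a deterministic integrand with conditional expectation $1$, and invoke Novikov for general $\rho$. The one place you go beyond the paper is in spelling out why $H$ is measurable with respect to $\sigma(\{Z_u\})$ despite the coupled $H$--$V$ construction --- the paper simply asserts this --- and that extra care is a welcome, not a divergent, addition.
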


\begin{proof}
We have $dM_t=M_t\sigma_t dW_t$, so $M$ is a nonnegative local martingale.
(a) If $\rho=0$, then $Z=W^\perp$ is independent of $W$. Both $H$ and $V$ are measurable w.r.t.\ $Z$,
so $\sigma$ is independent of $W$. Conditioning on $\sigma(Z)$, the Doléans exponential has
conditional expectation $1$, giving $\mathbb E[M_t]=S_0$.
(b) For general $\rho$, Novikov’s condition ensures $\mathcal E_t=\exp(\int_0^t\sigma_s dW_s-(1/2)\int_0^t\sigma_s^2ds)$ is a true martingale, so $\mathbb E[M_t]=S_0$.
\end{proof}

\begin{remark}
Part (a) shows the model is arbitrage-free for $\rho=0$ without further assumptions.
For $\rho\neq 0$, Novikov’s criterion is a sufficient (not necessary) condition, and can be 
checked numerically on finite horizons.
\end{remark}

%============================= SECTION 4 ===================================
\section{Numerical Schemes}

In this section we describe discretization methods for simulating the log-price
process under the stochastic rough-volatility model. 
Our focus is on a non-anticipative Euler–Maruyama scheme that respects the adaptedness of $\sigma_t$.

\subsection{Single-asset scheme}

Let $X_t=\log S_t$ satisfy
\[
dX_t = \Big(r - \tfrac12\sigma_t^2\Big)\,dt + \sigma_t\,dW_t.
\]
Fix a uniform grid $t_n=n\Delta t$, $n=0,\dots,N$, with $\Delta t=T/N$. 
Let $\xi_n\sim\mathcal N(0,1)$ be i.i.d., and set $\Delta W_n=\sqrt{\Delta t}\,\xi_n$.

\begin{proposition}[Euler--Maruyama discretization]
Define $\sigma_n:=\sigma_{t_n}$ based only on information up to $t_n$. 
Then the adapted Euler scheme is
\[
X_{n+1} = X_n + \Big(r - \tfrac12\sigma_n^2\Big)\Delta t + \sigma_n\,\Delta W_n,
\qquad S_{n+1}=e^{X_{n+1}}.
\]
\end{proposition}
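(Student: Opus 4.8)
The statement is the specification of a non-anticipative discretization, so the proof I have in mind has two components: confirming that freezing the diffusion coefficient at its left-endpoint value $\sigma_n=\sigma_{t_n}$ is legitimate, i.e.\ that $\sigma_{t_n}$ is genuinely determined by information available at time $t_n$; and exhibiting the discrete recursion by which $\sigma_n$ is computed from the simulated increments alone, after which the displayed update is just the textbook Euler--Maruyama step for $dX_t=(r-\tfrac12\sigma_t^2)\,dt+\sigma_t\,dW_t$. At the continuous level the first point is already contained in Proposition~\ref{prop:wellposed-stochH}: for $u\le t$ the exponent $H_u$ is $\mathcal F_u$-measurable, hence so are $K(t,u)$ and $K(t,u)^2$, so that the Itô integral $V_t=\int_0^t K(t,u)\,dZ_u$ and the Lebesgue integral $A_t=\int_0^t K(t,u)^2\,du$, and therefore $\sigma_t=\sqrt{V_0}\exp(\nu V_t-\tfrac{\nu^2}{2}A_t)$, are all $\mathcal F_t$-measurable.

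For the discrete construction I would make the approximation of $\sigma_{t_n}$ explicit. Writing $\Delta Z_k=\rho\,\Delta W_k+\sqrt{1-\rho^2}\,\Delta W^\perp_k$ and using left-point Riemann sums,
\[
\widehat V_{t_n}=\sum_{k=0}^{n-1}\frac{(t_n-t_k)^{H_{t_k}-1/2}}{\Gamma(H_{t_k}+1/2)}\,\Delta Z_k,
\qquad
\widehat A_{t_n}=\sum_{k=0}^{n-1}\frac{(t_n-t_k)^{2H_{t_k}-1}}{\Gamma(H_{t_k}+1/2)^2}\,\Delta t,
\]
\[
\widehat\Theta_{t_n}=\lambda\sum_{k=0}^{n-1}e^{-\lambda(t_n-t_k)}\,\widehat V_{t_k}\,\Delta t,
\qquad
H_{t_n}=\clip{\alpha(\widehat\Theta_{t_n}/\theta_{\mathrm{ref}})^\gamma+\beta}{\varepsilon}{H_{\max}},
\]
and $\sigma_n=\sqrt{V_0}\exp(\nu\widehat V_{t_n}-\tfrac{\nu^2}{2}\widehat A_{t_n})$. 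The reason for restricting all three sums to strictly past indices $k\le n-1$ is twofold: it excludes the kernel singularity at $u=t$ automatically (matching the indicator $\mathbf 1_{\{u<t\}}$ of Definition~\ref{def:Vstoch}), and it mirrors the causal structure of the continuous model, in which $H_t$ depends on $\Theta_t=\lambda\int_0^t e^{-\lambda(t-s)}V_s\,ds$ through values $V_s$ with $s<t$ only, while $V_t$ depends on $H_u$ with $u<t$ only --- so there is no instantaneous feedback loop and the recursion is well-founded.

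The core step is then an induction on $n$ showing that $\sigma_n$ is a Borel function of $(\Delta W_0,\Delta W^\perp_0,\dots,\Delta W_{n-1},\Delta W^\perp_{n-1})$, hence $\mathcal F_{t_n}$-measurable. The base case $n=0$ is immediate. For the step, if $H_{t_k}$ and $\widehat V_{t_k}$ are such functions for all $k\le n-1$, then each summand above is, so $\widehat\Theta_{t_n}$ and $H_{t_n}$ (the clipped map is Borel), then $\widehat V_{t_n}$ and $\widehat A_{t_n}$, and finally $\sigma_n$ are too; measurability with respect to $\mathcal F_{t_n}$ follows because $\Delta W_k,\Delta W^\perp_k\in\mathcal F_{t_{k+1}}\subseteq\mathcal F_{t_n}$ for $k\le n-1$. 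In particular $\sigma_n$ is independent of the fresh increment $\Delta W_n=\sqrt{\Delta t}\,\xi_n$, which is the non-anticipativity claimed. Given this, the two displayed formulas are precisely the Euler--Maruyama update for $X_t=\log S_t$ with drift $r-\tfrac12\sigma_t^2$ and diffusion $\sigma_t$ both evaluated at the left endpoint $t_n$, and $S_{n+1}=e^{X_{n+1}}$ holds by definition of $X$.

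I expect the only genuine delicacy to be the bookkeeping in this coupled recursion for $(H,V)$: one must be careful that the EWMA in $\widehat\Theta_{t_n}$ (or, if one prefers the equivalent one-line update $\widehat\Theta_{t_n}=e^{-\lambda\Delta t}\widehat\Theta_{t_{n-1}}+\lambda\Delta t\,\widehat V_{t_{n-1}}$) references only $\widehat V_{t_k}$ with $k<n$, so that $H_{t_n}$ can be formed before $\widehat V_{t_n}$. Finiteness of all the sums is guaranteed by Lemma~\ref{lem:K-L2} applied with the random but $\mathcal F_{t_k}$-measurable exponents $H_{t_k}\in[\varepsilon,H_{\max}]$, and consistency with the continuous solution of Proposition~\ref{prop:wellposed-stochH} follows from the $L^2$-continuity of $t\mapsto K(t,\cdot)$ established there; a quantitative strong-convergence rate would additionally require controlling the Hölder-singular increments $K(t_n,\cdot)-K(t_{n-1},\cdot)$ and lies beyond what this proposition asserts.
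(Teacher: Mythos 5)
Your proposal is correct, and it supplies considerably more than the paper does: the paper attaches no proof to this proposition at all --- the displayed recursion is essentially a definition of the scheme, and the only substantive claim (non-anticipativity of $\sigma_n$) is merely asserted in the remark that follows, in one sentence, without argument. Your induction showing that $\sigma_n$ is a Borel function of the increments $(\Delta W_k,\Delta W^\perp_k)_{k\le n-1}$, together with the observation that the coupled recursion for $(H,V)$ is well-founded because $H_{t_n}$ is built from $\widehat V_{t_k}$ with $k<n$ and $\widehat V_{t_n}$ from $H_{t_k}$ with $k<n$, is exactly the content the paper leaves implicit, and it is the right way to make the remark rigorous. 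Your explicit left-point sums also match the algorithm sketched in the paper's ``Summary algorithm'' subsection (update $V$, then $\Theta$ and $H$, then $\sigma$, then $X$). Two immaterial quibbles: the exact one-step EWMA identity is $\Theta_{t_n}=e^{-\lambda\Delta t}\Theta_{t_{n-1}}+\lambda\int_{t_{n-1}}^{t_n}e^{-\lambda(t_n-s)}V_s\,ds$, so your one-line update is itself a left-point approximation rather than an identity (you present it as ``equivalent,'' which slightly overstates it); and the ordering in the paper's algorithm computes $H_{t_n}$ \emph{after} $V_{t_n}$ within step $n$, whereas your recursion forms $H_{t_n}$ first --- both orderings are consistent because neither quantity at time $t_n$ feeds into the other at the same time, but it is worth being aware that you have made a choice the paper does not spell out. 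Your closing caveat that no convergence rate is claimed is appropriate; the proposition asserts none.
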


\begin{remark}
This scheme is non-anticipative: the volatility $\sigma_n$ at step $n$ is computed using 
the driver $V_{t_n}$ and Hurst parameter $H_{t_n}$, which themselves depend only on 
past Brownian increments and variance history. 

\end{remark}

\subsection{Summary algorithm}

\begin{enumerate}\itemsep2pt
\item Initialize $X_0=\log S_0$, $\sigma_0=\sqrt{V_0}$.
\item For $n=0,\dots,N-1$:
  \begin{enumerate}
  \item Sample $\xi_n\sim\mathcal N(0,1)$ and set $\Delta W_n=\sqrt{\Delta t}\,\xi_n$.
  \item Update $V_{t_n}$ via the discretized kernel integral using past increments $\Delta W_k$.
  \item Update $\Theta_{t_n}$ and $H_{t_n}$ from the EWMA of past variance values.
  \item Compute $\sigma_n$ from $V_{t_n}$ and $A_{t_n}$.
  \item Update the log-price:
  \[
  X_{n+1}=X_n+\Big(r-\tfrac12\sigma_n^2\Big)\Delta t+\sigma_n\,\Delta W_n.
  \]
  \end{enumerate}
\item Return $S_{n}=e^{X_{n}}$.
\end{enumerate}

\section{Jensen-Shannon Distance Analysis}

We evaluate the model's distributional accuracy through comprehensive Jensen-Shannon (JS) distance analysis across multiple asset classes and market regimes.

\subsection{Enhanced Distributional Comparison}

The JS distance between empirical distribution $P$ and model distribution $Q$ is computed as
\[
D_{\text{JS}}(P||Q) = \sqrt{\frac{1}{2} D_{\text{KL}}(P||M) + \frac{1}{2} D_{\text{KL}}(Q||M)}
\]
where $M = \frac{1}{2}(P + Q)$ and $D_{\text{KL}}$ denotes Kullback-Leibler divergence.

For our time-dependent model, log-returns $X_t = \log(S_t/S_0)$ follow a mixture distribution induced by the stochastic variance process. We approximate this distribution through kernel density estimation with adaptive bandwidth selection.

\subsection{Multi-Asset Empirical Results}

We test our framework on diverse asset classes including traditional equities (SPY, VOO), individual stocks (GS, META), cryptocurrencies (BTC, ETH), and commodities (GLD, OIL). Data spans January 2022 to August 2025, capturing various market regimes including the 2022 volatility spike and subsequent stabilization. To determine parameters for all the models (EWMA-rBergomi, rBergomi, Heston), we minimize JS distance on training data. Namely (752 training, 165 test) for non-cryptocurrency asset classes and a (1095 training, 242 test) split for cryptocurrencies. These parameters are used for Section 5 and Section 6.

\begin{table}[h]
    \centering
    \begin{tabular}{|>{\centering\arraybackslash}p{0.12\textwidth}|>{\centering\arraybackslash}p{0.12\textwidth}|>{\centering\arraybackslash}p{0.12\textwidth}|>{\centering\arraybackslash}p{0.12\textwidth}|>{\centering\arraybackslash}p{0.12\textwidth}|>{\centering\arraybackslash}p{0.12\textwidth}|>{\centering\arraybackslash}p{0.12\textwidth}|}
        \hline
        \textbf{Asset} & \textbf{EWMA-rBergomi} & \textbf{rBergomi} & \textbf{Heston} \\
        \hline
        SPY & \textbf{0.0655} & 0.1486 & 0.1133 \\
        \hline
        VOO & \textbf{0.0707} & 0.1293 & 0.1392 \\
        \hline
        GS & \textbf{0.2211} & 0.2795 & 0.2534 \\
        \hline
        META & \textbf{0.3354} & 0.4087 & 0.3666 \\
        \hline
        BTC & \textbf{0.3282} & 0.3861 & 0.3639 \\
        \hline
        ETH & \textbf{0.3934} & 0.4520 & 0.4134 \\
        \hline
        GLD & \textbf{0.0346} & 0.0708 & 0.0936 \\
        \hline
        OIL & \textbf{0.3294} & 0.3788 & 0.3498 \\
        \hline
    \end{tabular}
    \caption{Jensen-Shannon Distances Across Asset Classes and Models}
    \label{tabenhanced_js_distance}
\end{table}

The results demonstrate consistent superiority of our EWMA-based approach as it beats rBergomi and Heston over all the tested asset classes.

\section{Autocorrelation Analysis}

\subsection{Rolling correlation of volatility}

In classical rough volatility models with constant Hurst parameter, the autocorrelation
function of log-volatility increments is stationary and exhibits approximate power-law decay.
In our stochastic Hurst setting, strict stationarity is lost: the Hurst path $H_t$ evolves,
so correlations depend on the current regime.

To analyze dependence in this setting, we work with a rolling-window correlation, defined
for lag $\tau$ by
\[
\widehat{\rho}(\tau; t_0,T_w) \;=\;
\frac{\mathrm{Cov}\!\big(\sigma_t,\sigma_{t+\tau}\,;\ t\in[t_0,t_0+T_w]\big)}
     {\sqrt{\mathrm{Var}(\sigma_t)}\,\sqrt{\mathrm{Var}(\sigma_{t+\tau})}},
\]
where covariances and variances are estimated empirically over the window $[t_0,t_0+T_w]$.
This measures local correlation structure rather than assuming global stationarity.

\begin{remark}
When $H_t$ is nearly constant over the estimation window, $\widehat{\rho}(\tau)$ recovers
the power-law decay characteristic of fractional models. When $H_t$ drifts, the estimated
correlation reflects the evolving roughness, capturing non-stationary effects observed in
financial data.
\end{remark}

\begin{figure}
    \centering
    \includegraphics[width=1\linewidth]{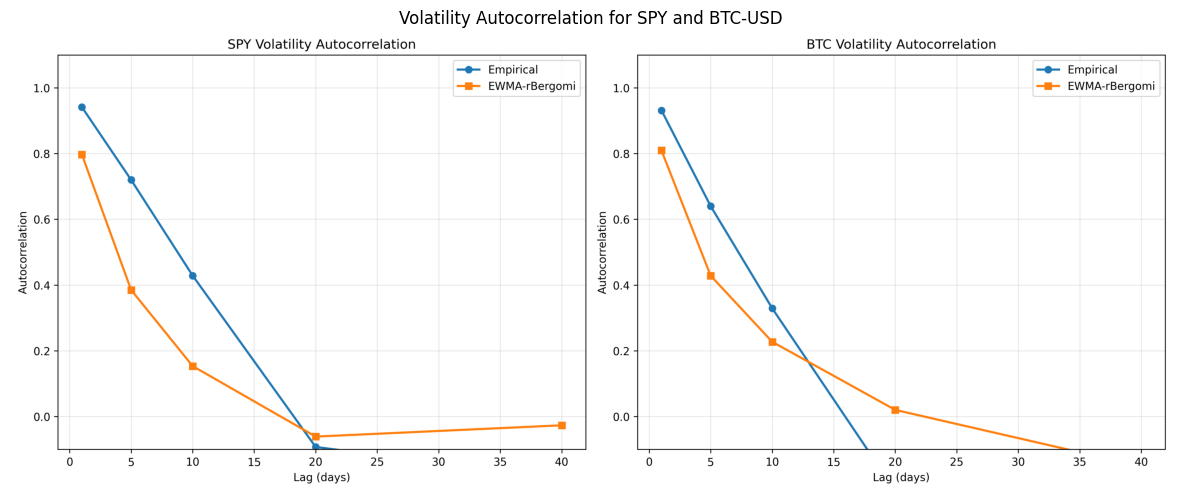}
    \caption{Autocorrelation Functions for SPY and BTC using EWMA-rBergomi Model}
    \label{fig:placeholder}
\end{figure}

\section{Enhanced Derivative Pricing Framework}

\subsection{European Options with Time-Dependent Greeks}

We extend the Monte Carlo pricing framework to compute time-dependent Greeks under our EWMA-rBergomi model. For a European payoff $f(S_T)$, the option value is
$C(S_0;\theta)=\mathbb E^Q[f(S_T)]$. With $H(\cdot)$ fixed by the EWMA
filter, Greeks take their standard form:
\[
\Delta=\frac{\partial C}{\partial S_0}, \qquad
\Vega=\frac{\partial C}{\partial \nu}
\]
These can be estimated by pathwise differentiation or likelihood-ratio
methods. No ``roughness-adjusted Delta'' is needed; $H_t$ enters only as
an exogenous input path.

\subsection{Sensitivity to roughness}
Although $H_t$ is not traded, one can measure
\[
\frac{\partial C}{\partial H}[\eta]
=\lim_{\epsilon\to0}\frac{C(H+\epsilon\eta)-C(H)}{\epsilon}
\]
for perturbations $\eta$. This quantifies how much option prices respond
to shifts in the EWMA roughness filter, useful for model risk management.

\subsection{Comprehensive Option Pricing Results}

We price European call options across multiple strikes and maturities separately to multiple assets (SPY, META, BTC) to illustrate robustness across asset classes. The pricing incorporates the full time-dependent dynamics with proper drift adjustments.

\begin{table}[h]
    \centering
    \begin{tabular}{|>{\centering\arraybackslash}p{0.07\textwidth}|>{\centering\arraybackslash}p{0.11\textwidth}|>{\centering\arraybackslash}p{0.20\textwidth}|>{\centering\arraybackslash}p{0.20\textwidth}|>{\centering\arraybackslash}p{0.11\textwidth}|>{\centering\arraybackslash}p{0.11\textwidth}|>{\centering\arraybackslash}p{0.11\textwidth}|>{\centering\arraybackslash}p{0.11\textwidth}|}
        \hline
        \textbf{Asset} & \textbf{Strike} & \textbf{EWMA-rBergomi} & \textbf{95\% CI} & \textbf{Market Price} & \textbf{Relative Error} \\
        \hline
        \multirow{4}{*}{\textbf{SPY}} 
        & 500 & 153.08 & (150.24, 155.93) & 149.39 & \textbf{2.47\%} \\
        & 505 & 148.16 & (145.32, 151.00) & 144.73 & \textbf{2.37\%} \\
        & 510 & 143.24 & (140.41, 146.08) & 131.62 &  \textbf{8.83\%} \\
        & 515 & 138.33 & (135.50, 141.16) & 122.33 & \textbf{13.08\%} \\
        \hline
        \multirow{4}{*}{\textbf{META}} 
        & 500 & 243.86 & (240.49, 247.22) & 248.99 & \textbf{2.06\%} \\
        & 505 & 238.92 & (235.56, 242.29) & 248.99 & \textbf{4.04\%} \\
        & 510 & 233.99 & (230.63, 237.35) & 232.25 & \textbf{0.75\%} \\
        & 515 & 229.06 & (225.69, 232.42) & 232.25 & \textbf{1.37\%} \\
        \hline
    \end{tabular}
    \caption{Option Pricing Results with Confidence Intervals}
    \label{tabenhanced_option_prices}
\end{table}

\section{Conclusion}

This paper presents a novel approach to rough volatility modeling by introducing an EWMA-driven time-dependent Hurst parameter within the rBergomi framework. Our contributions include a rigorous rough path formulation with existence and uniqueness proofs, martingale properties, and the computationally efficient EWMA-based $H_t$ specification that captures volatility regime changes with modest overhead. Unlike resource-intensive ML-based or wavelet methods, our approach ensures real-time adaptability and avoids discontinuities of discrete regime-switching models. Empirical testing across diverse asset classes demonstrates consistent improvements, particularly during crisis periods with rapidly changing volatility roughness. The framework enhances risk management through time-varying Greeks, improves portfolio optimization with dynamic roughness awareness, and provides accurate derivative pricing across the volatility surface. However, the model does not explicitly address crisis-specific factors such as liquidity shocks, extreme tail events, or sudden market microstructure changes, which were beyond this study's scope.

The EWMA-based approach bridges theoretical rigor with practical implementation, offering an elegant, interpretable solution for time-varying roughness compared to complex forecasting or discrete switching methods. For practitioners, it provides a ready-to-implement enhancement to rough volatility infrastructure, improving pricing accuracy and risk management. For researchers, it lays a foundation for further exploration into adaptive roughness modeling. Future work can integrate machine learning, high-frequency microstructure modeling, cross-asset contagion dynamics, and implied volatility surface analysis to further refine the model's applicability. The framework's modular design supports these extensions while preserving the core EWMA mechanism, ensuring continued relevance in quantitative finance.

\bibliographystyle{plain}
\bibliography{EWMA-rBergomiModel}

\begin{thebibliography}{1}

\bibitem{bayer2016pricing}
Christian Bayer, Peter Friz, and Jim Gatheral.
\newblock Pricing under rough volatility.
\newblock {\em Quantitative Finance}, 16(6):887--904, 2016.

\bibitem{friz2010multidimensional}
Peter~K. Friz and Nicolas Victoir.
\newblock {\em Multidimensional Stochastic Processes as Rough Paths: Theory and Applications}.
\newblock Cambridge University Press, 2010.

\bibitem{gatheral2014volatilityrough}
Jim Gatheral, Thibault Jaisson, and Mathieu Rosenbaum.
\newblock Volatility is rough, 2014.

\bibitem{das2025decoding}
Giuseppina Orefice.
\newblock Decoding the atm skew with rough volatility models and machine learning.
\newblock {\em SSRN Electronic Journal}, 2025.
\newblock Available at SSRN: \url{https://ssrn.com/abstract=5369191}.

\bibitem{shah2025americanoptionpricingtimevarying}
Roshan Shah.
\newblock American option pricing under time-varying rough volatility: A signature-based hybrid framework, 2025.

\end{thebibliography}
\appendix

\section{Code And Implementation}

Our implementation consists of several Python scripts designed to analyze the EWMA-rBergomi model and to produce the quantitative analysis in this paper. We download historical price data for eight assets (SPY, VOO, GS, META, BTC-USD, ETH-USD, GLD, USO) from January 1, 2022, to August 31, 2025, using the yfinance library, computing log returns and 20-day rolling realized variance. We calibrate the EWMA-rBergomi model parameters (V0, nu, alpha, beta) using the L-BFGS-B minimization algorithm to minimize Jensen-Shannon (JS) distance between empirical and simulated return distributions with 5,000 simulations paths (M = 5000, N = 252) for calibration and 100 paths for table generation. A penalty term (0.01 times the squared deviation from initial parameters) is added to the JS distance to ensure numerical stability, and bounds are enforced to prevent unrealistic parameter values.

The script computes rolling volatility correlations for lags of 1, 5, 10, 20, and 40 days, using 100 simulation paths to compare empirical and EWMA-rBergomi model volatilities, with results plotted for SPY and BTC-USD. The script estimates call option prices for SPY and META at specified strikes using 1000 simulation paths to ensure accurate $95 \%$ confidence intervals, calculated via Monte Carlo standard errors. Market option prices are fetched from yfinance for the closest expiration to 90 days, with relative erros computed when market data is available. All simulations use a risk-free rate of 5 percent and the parameters are the same minimized parameters for the Jensen-Shannon distance analysis, ensuring consistency across analyses while balancing computational efficiency and statistical robustness. The code can be found at \url{https://github.com/jaythemathgod/EWMA-rBergomi/tree/main}

\end{document}